\documentclass[conference]{IEEEtran}
\IEEEoverridecommandlockouts

\usepackage{cite}
\usepackage{amsmath,amssymb,amsfonts}
\usepackage{amsthm}
\usepackage{graphicx}
\usepackage[normalem]{ulem}
\usepackage{tikz}
\usepackage{circuitikz}
\usepackage{pgfplots}
\usepackage{booktabs}
\usepackage{enumitem}
\usepackage{hyperref}
\pgfplotsset{compat=1.18}
\usepackage{textcomp}
\usepackage{xcolor}
\usepackage{float, comment}

\def\BibTeX{{\rm B\kern-.05em{\sc i\kern-.025em b\kern-.08em
T\kern-.1667em\lower.7ex\hbox{E}\kern-.125emX}}}

\theoremstyle{definition}
\newtheorem{definition}{Definition}
\newtheorem{example}{Example}
\newtheorem{remark}{Remark}

\theoremstyle{plain}
\newtheorem{theorem}{Theorem}
\newtheorem{lemma}{Lemma}
\newtheorem{proposition}{Proposition}
\newtheorem{corollary}{Corollary}

\title{Distortion-Rate Outage Probability for Non-ergodic Classes of Sources}
\author{
\IEEEauthorblockN{
Dongmin Park, 
 Ibrahim Abou-Faycal\\ 
 \textit{Department of Electrical and Computer Engineering,}\\ \textit{American University of Beirut}, Beirut, Lebanon \\
  Emails: djp00@aub.edu.lb, Ibrahim.Abou-Faycal@aub.edu.lb  \and Razane Tajeddine \\ \textit{Department of Mathematics and Systems Analysis,} \\
  \textit{Aalto University}, Finland \\
  Email: rstajeddine@gmail.com }\thanks{This work was done while R.T. was affiliated with the department of Electrical and Computer Engineering at the American University of Beirut.}} 

\begin{document}

\maketitle


\begin{abstract}
    The concept of outage has been commonly used in the context of transmission rates over non-ergodic channels. For those channels, one channel state or realization may be encountered for the whole transmission duration of a codeword. In the dual setting of source coding and whenever source statistics are not ergodic, we define in this work the notion of outage; we discuss related properties, consider the special case of compact source classes and provide examples.
\end{abstract}

\section{Introduction}

Shannon's channel coding theorem establishes the channel capacity as the supremum of the rates at which information could be transmitted reliably over a channel~\cite{cover_elements_2005}. Whenever the transmission rate exceeds the capacity, reliable communication is impossible. In wireless communication settings, fading models such as Rayleigh fading~\cite{923716} are often adopted for channels. When the fading is slow, the channel state remains fixed over the entire codeword, making the channel non-ergodic. In this case, one cannot simply average over the channel states and the classical capacity loses its operational meaning. For such channels, the notion of \emph{outage capacity} was introduced (see, \emph{e.g.},~\cite{Tse_Viswanath_2005}), in which a prescribed probability of transmission failure is tolerated. With the overall communication chain in mind, an expected distortion may not be the appropriate metric to characterize reliability when using such non-ergodic channels. To that effect, the notion of distortion outage for MIMO block fading channel was introduced in \cite{5513482}.

When it comes to source coding, for a source with a known distribution $p_\theta$, Shannon's rate-distortion function $R_\theta(D)$~\cite{https://doi.org/10.1002/j.1538-7305.1948.tb01338.x} gives the infimum of the  compression rates for which the expected distortion between the source and its reconstruction does not exceed $D$. In this work, we consider a class of sources $\Lambda$, where each $\theta \in \Lambda$ is associated with its own probability law $p_\theta(\cdot)$. The encoder and decoder have no knowledge of $\theta$, \emph{i.e.}, they are \emph{blind}. Such coding schemes were studied in~\cite{SAKRISON1969165, ZivJ.1972Cosw}.

In this work, we study the source coding analogue of outage capacity: as with fading channels, we distinguish two regimes according to how $\theta$ evolves over time. In the \emph{fast} regime, $\theta$ is redrawn on a short time scale according to an ergodic process over $\Lambda$, so the resulting symbol sequence is itself ergodic. One can then average over $\Lambda$, describe the class by a single law and then use the rate-distortion theory tools directly. In the \emph{slow} regime, a single $\theta\in\Lambda$ is drawn once and held fixed for the entire transmission of the codeword. The symbols are then generated Independent and Identically Distributed (IID) according to $p_\theta$. The resulting process is a mixture of ergodic components and is non-ergodic in general. So, again, averaging over $\Lambda$ loses its operational meaning. Moreover, the realized $\theta$ may require a rate that exceeds the available rate budget to meet $D$. We refer to this event as an \emph{outage}. Building on Sakrison's rate-distortion function for a class of sources~\cite{SAKRISON1969165}, we formally define the \emph{Distortion-Rate Outage Probability} (DROP) for this setting.

\begin{figure}[!t]
\centering
\resizebox{\linewidth}{!}{%
\begin{circuitikz}
\tikzstyle{every node}=[font=\Large]

\draw (-10.75,5.5) to[short] (-8.25,5.5);
\node [font=\Huge] at (-10.5,5.0) {Source};
\node [font=\Huge] at (-9.5,6) {$X^n$};

\draw (-8.25,6.75) rectangle
    node {\huge Encoder \(f_n\)}
    (-4,4.25);

\draw (-4,5.5) to[short] (1.75,5.5);
\node [font=\Huge] at (-1.25,6) {index};
\node [font=\huge] at (-1.15,4.9) {$\in\!\{1,2,\ldots,2^{nR}\}$};

\draw (6,5.5) -- (8.75,5.5);
\node [font=\Huge] at (7.5,6.1) {$\hat{X}^n$};

\node [font=\Huge] at (8.5,4.85) {Output};

\draw (1.75,6.75) rectangle
    node {\huge Decoder \(g_n\)}
    (6,4.25);

\end{circuitikz}%
}
\vspace{-0.6cm}
\caption{Source coding.
\label{fig:source_coding}}
\end{figure}

\section{Preliminaries}\label{sec:preliminaries}

Consider the classical source coding problem, shown in Figure~\ref{fig:source_coding}. A source \(\theta\) generates a sequence of \(n\) IID symbols \(X^n = (X_1,\cdots,X_n) \in \mathcal X^n\), where \(\mathcal X\) is a separable metric space. This sequence is encoded by \(f_n:\mathcal X^n\to\{1,\ldots,M_n\}\) and decoded by \(g_n:\{1,\ldots,M_n\}\to \hat{\mathcal X}^n\), yielding the reconstruction \(\hat X^n := g_n( f_n( X^n ))\). 
Let \(d:\mathcal X\times\hat{\mathcal X}\to[0,\infty)\) be a per-symbol distortion measure, where \(d(x_i,\hat x_i)\) quantifies the distortion between a source symbol \(x_i\) and its reconstruction \(\hat x_i\). The distortion between a source sequence \(x^n\) and its reconstruction \(\hat x^n\) is defined as the per-symbol average, \(d(x^n,\hat x^n):=\frac1n\sum_{i=1}^nd(x_i,\hat x_i)\). The definition of an encoder--decoder pair as a code, together with its rate and achievability are formalized in Section~\ref{sec:outage}.

 The same schematic as in Figure~\ref{fig:source_coding} applies whether there is a single source, or the code is designed for a class \(\Lambda\) of sources, in which case \((f_n,g_n)\) is expected to work well across the entire class in the sense of not exceeding a prescribed distortion level.

For a source with known probability law \(p_\theta\), Shannon's information rate--distortion function \cite{cover_elements_2005} is defined as
\begin{equation*}
    R_\theta(D) := \underset{p_{\hat X|X}\,:\,\mathbb{E}_\theta[d(X,\hat X)]\le D}{\inf} I_\theta(X;\hat X),
\end{equation*}
where the minimization is over all conditional distributions \(p_{\hat X|X}\) such that the expectation, taken with respect to the joint law \(p_\theta(x)\,p_{\hat X|X}(\hat x|x)\), satisfies the fidelity constraint \(D\). The Shannon coding theorem establishes the operational meaning of \(R_\theta(D)\) as the infimum of the rates at which a source with law \(p_\theta\) can be compressed such that the average per-symbol distortion between the source and its reconstruction does not exceed \(D\).

We exclusively consider in this work a class \(\Lambda\) of sources such that \(\mathbb E_\theta[d(X,x_0)]<\infty\) for all \(\theta\in\Lambda\), for some \(x_0\in\mathcal X'\). Sakrison \cite{SAKRISON1969165} extended this notion to such a class \(\Lambda\) of sources by providing a definition for the information rate--distortion function of the class, \(R_\Lambda^I(D)\), via a minimax formulation:
\begin{equation}\label{eq:Sakrison_RLambda}
    R_\Lambda^I(D) := \inf_{p_{\hat X|X}\in Q_\Lambda(D)} \sup_{\theta\in\Lambda} I_\theta(X;\hat X),
\end{equation}
where
\begin{equation*}
    Q_\Lambda(D) := \left\{ p_{\hat X|X} : \mathbb{E}_\theta[d(X,\hat X)] \le D \quad \forall\,\theta\in\Lambda \right\}.
\end{equation*}

That is, the optimization is over the choice of a single conditional distribution \(p_{\hat X|X}\), satisfying the distortion constraint for \emph{every} source in the class simultaneously, which minimizes the worst-case mutual information over \(\Lambda\). 

Following Sakrison \cite{SAKRISON1969165}, a class \(\Lambda\) is said to be \emph{compact} if, given any \(\varepsilon_1>0\), there exists a totally bounded set \(\mathcal X'\subseteq\mathcal X\) (totally bounded in the sense that \(\mathcal X'\) can be covered by finitely many neighborhoods of any given radius) and a function 
\begin{equation*}
    x_p: \mathcal{X} \longrightarrow \mathcal{X}'
\end{equation*}
such that
\begin{equation}\label{eq:compactness}
    \mathbb E_\theta[d(X,x_p(X))] \le \varepsilon_1 \qquad \forall\,\theta\in\Lambda.
\end{equation}

Intuitively, compactness requires that \emph{all} sources in \(\Lambda\) have most of their probability mass concentrated on a \emph{common} totally bounded set. A special case of compactness is when there is a finite number of sources in the class.

Additionally, Sakrison proves a source coding theorem \cite{SAKRISON1969165} in two parts:
\begin{itemize}
\item The \emph{negative} (converse) statement holds for \emph{any} class \(\Lambda\), compact or not: if an encoder--decoder pair achieves distortion at most \(D\) for every \(\theta\in\Lambda\), the rate of the code must be at least \(R_\Lambda^I(D)\). Therefore, \(R_\Lambda^I(D)\) is always a necessary rate for uniform-distortion coding across the class. 
\item The \emph{positive} (achievability) statement requires the class to be \emph{compact}, and states given any positive \(D\) and \(\delta\), there exists a code of sufficiently large length achieving a distortion at most \(D+\delta\) for every \(\theta\in\Lambda\), with rate no greater than \(\sup_{\theta\in\Lambda} R_\theta(D) + \delta\).
\end{itemize}

Consequently and in general,
\begin{equation}\label{eq:general_ineq}
    R_\Lambda^I(D) \ge \sup_{\theta\in\Lambda} R_\theta(D),
\end{equation}
which follows directly from the definition of \(R_\Lambda^I(D)\): restricting a single conditional distribution to satisfy the constraint for every \(\theta\in\Lambda\) can only make the worst-case rate at least as large as the rate for the individual worst-case source. Combining the negative statement (giving \(R_\Lambda^I(D)\) as a necessary rate) with the positive statement for a compact class (proving that a rate arbitrarily close to  \raisebox{-0.01in}{\(\sup_{\theta\in\Lambda} R_\theta(D)\)} is achievable), together with \eqref{eq:general_ineq}, forces equality:
\begin{equation}\label{eq:Sakrison_compact_sup}
    R_\Lambda^I(D) = \sup_{\theta\in\Lambda} R_\theta(D) \qquad \text{for a compact class } \Lambda.
\end{equation}

As noted by Sakrison, this equality is of considerable practical importance, since \(R_\Lambda^I(D)\) as defined in \eqref{eq:Sakrison_RLambda} is generally difficult to evaluate directly, whereas \(\sup_{\theta\in\Lambda} R_\theta(D)\) can often be computed from the individual rate--distortion functions of the sources in the class. For a non-compact class, only the necessary-rate direction is guaranteed; \(R_\Lambda^I(D)\) need not be achievable (a notion we formalize in Section~\ref{sec:outage}), and the true minimum required rate may exceed it.

Similarly to the case of a single source, \(R_\Lambda^I(D)\) is nonincreasing and convex in \(D\) \cite{SAKRISON1969165}, one can also define the information distortion-rate function for any class of sources, denoted \(D_\Lambda^I(R)\), as the ``inverse'' of \(R_\Lambda^I(D)\):
\begin{align*}
    D_\Lambda^I(R) &:= \inf\{D : R_\Lambda^I(D)\le R\}\\
    R_\Lambda^I(D) &:= \inf\{R : D_\Lambda^I(R)\le D\}.
\end{align*}

Additionally, if \(R_\Lambda^I(0) = R_0\) is finite, then \(D_\Lambda^I(R) = 0\) for all \(R > R_0\), and vice versa. We use this same notion for ``inverses" for the operational objects that we introduce in the following section.

\section{Distortion-Rate Outage Probability}\label{sec:outage}


In this section, we first formalize the operational objects introduced informally in Section~\ref{sec:preliminaries} and restate Sakrison's source coding theorem using our notation. We emphasize that we are primarily interested in a \emph{blind} scenario where neither the encoders nor the decoders have any information about the chosen source \(\theta\). 

In what follows, we consider a class \(\Lambda\) of sources such that \(\mathbb E_\theta[d(X,x_0)]<\infty\) for all \(\theta\in\Lambda\) and for some \(x_0\in\mathcal X'\). The sources $\theta \in \Lambda$ are chosen with a probability law generically denoted $\Pr(.)$.

\begin{definition}[Code]\label{def:code}
A  blocklength-$n$ \emph{code} \( C_n = (f_n,g_n) \) is a pair consisting of an encoder \(f_n:\mathcal X^n\to\{1,\ldots,M_n\}\) and a decoder \(g_n:\{1,\ldots,M_n\}\to\hat{\mathcal X}^n\). The \emph{rate} of \(C_n\) is
\begin{equation}
    R(C_n) := \frac1n \log M_n
\end{equation}
and, for a source \(\theta\in\Lambda\), the \emph{distortion} of \(C_n\) is
\begin{equation}
    d_\theta(C_n) := \mathbb E_\theta\big[d\big(X^n,\,g_n(f_n(X^n))\big)\big].
\end{equation}
\end{definition}

\begin{definition}[Achievability]\label{def:achievable}
A pair \((R,D)\) is \emph{achievable} for the class \(\Lambda\) if there exists a sequence of code \(\{C_n\}_{n}\) with
\begin{equation*}
    \limsup_{n\to\infty} R(C_n) \le R \hspace{10pt} \& \hspace{10pt}
    \limsup_{n\to\infty} d_\theta(C_n) \le D \quad \forall\,\theta\in\Lambda.
\end{equation*}
The \emph{achievable region} \(\mathcal R_\Lambda\) is the closure of the set of all achievable pairs.
\end{definition}

\begin{definition}[Rate--distortion function]\label{def:R_op}
For the class \(\Lambda\),
\begin{align*}
    & R_\Lambda(D) := \inf\{R : (R,D)\in\mathcal R_\Lambda\} \\
    \& \quad & D_\Lambda(R) := \inf\{D : (R,D)\in\mathcal R_\Lambda\}.
\end{align*}

\end{definition}

By construction, \raisebox{-0.01in}{\(R_\Lambda(D)\)} is well-defined for \(\Lambda\) and carries its intended operational meaning directly: it is the infimum of the rates at which a single code, designed without knowledge of \(\theta\), compresses any source in \(\Lambda\) with asymptotic distortion at most \(D\) for every \(\theta\in\Lambda\).

\begin{theorem}[Sakrison's source coding theorem \cite{SAKRISON1969165}]\label{thm:sakrison}
For the class \(\Lambda\),
\begin{equation}\label{eq:op_vs_I}
    R_\Lambda(D) \ge R_\Lambda^I(D)\ge \sup_{\theta\in\Lambda}R_{\theta}(D).
\end{equation}
If \(\Lambda\) is compact, then
\begin{equation}\label{eq:op_vs_I_compact}
    R_\Lambda(D) = R_\Lambda^I(D) = \sup_{\theta\in\Lambda}R_\theta(D).
\end{equation}
\end{theorem}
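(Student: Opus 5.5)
The statement is a restatement of Sakrison's theorem in the operational language of Definitions~\ref{def:code}--\ref{def:R_op}. I would therefore prove it by translating each of the two parts of Sakrison's coding theorem, as recalled in Section~\ref{sec:preliminaries}, into statements about the achievable region $\mathcal R_\Lambda$. I would then combine them with the elementary inequality \eqref{eq:general_ineq}.

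\textbf{First inequality in \eqref{eq:op_vs_I}.} The second inequality, $R_\Lambda^I(D)\ge\sup_\theta R_\theta(D)$, is exactly \eqref{eq:general_ineq}: any $p_{\hat X|X}\in Q_\Lambda(D)$ is feasible for each individual $R_\theta(D)$, so $\sup_\theta I_\theta(X;\hat X)\ge \sup_\theta R_\theta(D)$, and taking the infimum over $Q_\Lambda(D)$ gives the claim.

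For $R_\Lambda(D)\ge R_\Lambda^I(D)$, take $(R,D')$ achievable with $D'\le D$ realized by codes $\{C_n\}$. For every $\varepsilon>0$ and all large $n$ we then have $R(C_n)\le R+\varepsilon$ and $d_\theta(C_n)\le D'+\varepsilon$ for each $\theta$. The pointwise limsup is not uniform in $\theta$, so I would apply Sakrison's converse directly to the block code, in the usual single-letterization way. Write $\hat X^n=g_n(f_n(X^n))$. For each $\theta$,
\[
nR(C_n)\ge H(\hat X^n)\ge I_\theta(X^n;\hat X^n)\ge \sum_i I_\theta(X_i;\hat X_i).
\]
Define the time-sharing kernel $p_{\hat X|X}=\frac1n\sum_i p_{\hat X_i|X_i}$. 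The main obstacle appears here: $p_{\hat X_i|X_i}$ depends on $\theta$, because it integrates over the other coordinates. Sakrison's argument handles this by working with the full kernel induced by the code and a minimax/convexity step. I would simply invoke his negative statement, which by the paper's recollection holds for any class and yields $R(C_n)\ge R^I_\Lambda(\max_\theta d_\theta(C_n))$ in the appropriate limiting sense.

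Given that statement, I would finish as follows. Convexity makes $R_\Lambda^I$ continuous on the interior of its domain, and monotonicity handles the rest, so letting $n\to\infty$ and then $\varepsilon\to0$ gives $R\ge R_\Lambda^I(D)$. Passing to the closure in the definition of $\mathcal R_\Lambda$ uses the same continuity in $D$ together with $R_\Lambda^I$ being nonincreasing. Taking the infimum over $R$ then gives the claim.

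\textbf{Compact case.} By the positive statement, for any $D$ and $\delta>0$ there are codes of some length with rate at most $\sup_\theta R_\theta(D)+\delta$ and distortion at most $D+\delta$ for all $\theta$. I would pick $\delta_k\downarrow0$ and concatenate or repeat the corresponding codes to build a single sequence $\{C_n\}$, padding lengths by reusing the fixed-length code blockwise. Distortion is additive per block, and the leftover symbols can be mapped to $x_0$, contributing vanishing distortion since $\mathbb E_\theta d(X,x_0)<\infty$. This shows that $(\sup_\theta R_\theta(D)+\delta,\,D+\delta)$ is achievable for every $\delta$, so $(\sup_\theta R_\theta(D),D)\in\mathcal R_\Lambda$ because the region is closed. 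Hence $R_\Lambda(D)\le\sup_\theta R_\theta(D)$, and sandwiching with \eqref{eq:op_vs_I} yields \eqref{eq:op_vs_I_compact}.

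The main obstacle remains the non-uniformity in $\theta$ of the limsup in Definition~\ref{def:achievable} within the converse. I expect to resolve it by applying Sakrison's converse at each fixed $n$ with distortion $\sup_\theta d_\theta(C_n)$ only along subsequences, or else by noting that Sakrison's argument needs only the per-$\theta$ bound.
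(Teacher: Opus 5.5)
Your route coincides with the paper's, which offers no argument beyond citing Sakrison's negative and positive statements and combining them with \eqref{eq:general_ineq}. Your proofs of the second inequality and of $R_\Lambda(D)\le\sup_\theta R_\theta(D)$ for compact $\Lambda$ (blockwise repetition, leftover symbols sent to $x_0$, closedness of $\mathcal R_\Lambda$) are fine. The genuine gap is the first inequality $R_\Lambda(D)\ge R^I_\Lambda(D)$. The obstacle you found there, that the per-letter kernels $p_{\hat X_i|X_i}$ depend on $\theta$, is fatal rather than technical: with $R^I_\Lambda$ defined through a single-letter kernel as in \eqref{eq:Sakrison_RLambda}, the inequality is false, so invoking Sakrison cannot close it. Take $\Lambda=\{\mathrm{Bern}(0.1),\mathrm{Bern}(0.9)\}$ with Hamming distortion and $D=0.05$. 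Using the union of a good codebook for $\mathrm{Bern}(0.1)$ and its bitwise complement costs one extra bit, so $R_\Lambda(0.05)\le H_b(0.1)-H_b(0.05)\approx0.183$. On the other hand, $Q_\Lambda(0.05)$ is convex and invariant under the relabeling $0\leftrightarrow1$, which swaps the two sources, and $I_\theta$ is convex in the kernel. Hence the minimax is attained by a binary symmetric kernel with crossover $a\le0.05$, giving $R^I_\Lambda(0.05)=\min_{a\le0.05}\bigl[H_b(0.1+0.8a)-H_b(a)\bigr]=H_b(0.14)-H_b(0.05)\approx0.298$. The same example violates \eqref{eq:op_vs_I_compact} for a finite, hence compact, class, so your closing sandwich fails at the same point.

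What is missing is a block version of the information quantity, e.g.\ $R^I_\Lambda(D):=\inf_n\frac1n\inf_{Q_n}\sup_\theta I_\theta(X^n;\hat X^n)$ over kernels $Q_n$ with $\mathbb E_\theta[d(X^n,\hat X^n)]\le D$ for all $\theta$. With it, the code's own kernel is $\theta$-independent, so $nR(C_n)\ge H(\hat X^n)\ge\sup_\theta I_\theta(X^n;\hat X^n)$ needs no single-letterization. The second inequality then follows from the per-$\theta$ bound $I_\theta(X^n;\hat X^n)\ge nR_\theta(D)$.

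Even then, the non-uniformity you flag is a real obstruction, and neither of your proposed fixes works. Definition~\ref{def:achievable} never guarantees a code meeting a distortion bound for all $\theta$ at once, along any subsequence. For $\{\mathcal N(\mu,1):\mu\in\mathbb Z\}$, every finite codebook has $\sup_\mu d_\mu(C_n)=\infty$. Yet shifted copies of one code covering $|\mu|\le k(n)$, with $k(n)\to\infty$ slowly, give $\limsup_n d_\mu(C_n)\le D$ for each $\mu$ at vanishing extra rate. The same diagonal trick yields pointwise achievability of $\sup_\theta R_\theta(D)$ for every countable class. So the converse needs the uniform notion $\limsup_n\sup_\theta d_\theta(C_n)\le D$.

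Finally, what Theorem~\ref{thm:compact_reduction} actually uses, $R_\Sigma(D)=\sup_{\theta\in\Sigma}R_\theta(D)$ for compact $\Sigma$, already follows from your achievability argument together with Shannon's converse applied to each $\theta$ separately, without any reference to $R^I_\Lambda$.
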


The remainder of this section is stated in terms of the operational quantity \(R_\Lambda(D)\),
which always carries the intended operational guarantee; \eqref{eq:op_vs_I_compact} is invoked wherever a compact subclass allows these to be evaluated via \(\sup_{\theta} R_\theta(D)\). 

Consider a \emph{slow} class of sources: for the whole duration of the transmitted codeword, the source produces characters according to a distribution \(\theta\in\Lambda\) chosen according to the probability law on \(\Lambda\). Suppose the distortion level should be below a threshold \(D\) for every source (with probability 1). Then the minimum required rate is \(R_\Lambda(D)\). However, similarly to the notion of outage for channel coding, if we allow the distortion level to exceed this threshold with some probability, we may require a lower rate \(R\). A \emph{distortion outage} occurs when the distortion level exceeds the threshold \(D\). Due to the dual nature between rate and distortion, this problem may be viewed from another angle: one may instead want to limit the rate by a threshold \(R\), resulting in a distortion level \(D_\Lambda(R)\). However, if we allow some probability for the rate to exceed \(R\), we may achieve lower distortion. We call such events \emph{rate outage}.

Let \(\Sigma\subseteq\Lambda\) denote a subclass of \(\Lambda\). For a rate limit \( R\) and distortion level \(D\), consider all subclasses such that
\begin{equation*}
    \{\Sigma\subseteq\Lambda : R_\Sigma(D)\le R\}.
\end{equation*}
We wish to find a subclass maximizing the probability
\begin{equation*}
    \sup_{\Sigma\subseteq\Lambda\,:\,R_\Sigma(D)\le R} \Pr(\Sigma).
\end{equation*}
Notice that the optimization constraint is over the rate--distortion functions of the subclasses \(\Sigma\) rather than of individual sources \(\theta\). We can now formally define the rate outage probability and distortion outage probability. Throughout, the optimization is understood to range over \(\Pr\)-measurable subclasses \(\Sigma\subseteq\Lambda\).

\begin{definition}
The \emph{rate outage probability} for a rate limit \(R\) at a given distortion level \(D\) is defined as
\begin{align}
    P_{\rm out}^{R}(R,D) &= 1-\sup_{\Sigma\subseteq\Lambda\,:\,R_\Sigma(D)\le R}\Pr(\Sigma)\label{eq:opt_rate_outage}
\end{align}
\end{definition}

\begin{definition}
The \emph{distortion outage probability} for a threshold distortion level \(D\) at a given rate \(R\) as
\begin{align}
    P_{\rm out}^D(D,R) &= 1-\sup_{\Sigma\subseteq\Lambda\,:\,D_\Sigma(R)\le D} \Pr(\Sigma)\label{eq:opt_distortion_outage}
\end{align}
\end{definition}

Note that these two definitions are equivalent, in part due to the dual nature of rate-distortion and distortion-rate functions, as shown in proposition \ref{prop:countable_atoms}. 


Define the following sets of subclasses:
\begin{align}
    \mathcal{A}_D:=\{\Sigma\subseteq\Lambda : D_\Sigma(\bar{R}) \le \bar{D}\},\\
    \mathcal{A}_R:=\{\Sigma\subseteq\Lambda : R_\Sigma(\bar{D}) \le \bar{R}\}.
\end{align}
If \(\Sigma\in\mathcal{A}_R\), then \(R_\Sigma(\bar{D}) \le \bar{R} \implies D_\Sigma(\bar{R})=\inf\{D:R_\Sigma(D) \le \bar{R}\}\le\bar{D} \implies \Sigma\in\mathcal A_D\), so \(\mathcal{A}_R\subseteq\mathcal{A}_D\), and by the same reasoning, \(\mathcal A_D\subseteq\mathcal A_R\). Therefore, \(\mathcal{A}_D=\mathcal{A}_R\), and the optimization problems in equations~\eqref{eq:opt_rate_outage},\eqref{eq:opt_distortion_outage} are taken over the same set of subclasses, and therefore the probabilities in~\eqref{eq:opt_rate_outage} and \eqref{eq:opt_distortion_outage} are equal. Hereafter, we drop the superscript and simply denote either quantity as
\begin{equation}\label{eq:pout_rate_distortion}
    P_{\rm out}(R,D):=P_{\rm out}^R(\bar{R},\bar{D})=P_{\rm out}^D(\bar{D},\bar{R}),
\end{equation}
where  \(R=\bar R\) and \(D=\bar D\). We call \(P_{\rm out}(\cdot,\cdot)\) the \emph{distortion-rate outage probability} of the class \(\Lambda\) of sources.

\begin{proposition}\label{prop:countable_atoms}
For each fixed $D$, there are at most countably many values $\bar R \ge 0$ such that
\[
  \Pr_{\theta\in\Lambda}\left(R_\theta(D) = \bar R\right) > 0.
\]
\end{proposition}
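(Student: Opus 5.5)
Fix $D$ and regard $\theta\mapsto R_\theta(D)$ as a function on $\Lambda$. We assume it is measurable with respect to the $\sigma$-algebra carrying $\Pr$; this is implicit in the statement, since otherwise the events $\{R_\theta(D)=\bar R\}$ would have no probability. Pushing $\Pr$ forward along this map gives a probability measure $\mu_D$ on $[0,\infty]$, defined by $\mu_D(B):=\Pr(\{\theta\in\Lambda: R_\theta(D)\in B\})$. The claim is then that $\mu_D$ has at most countably many atoms. This is a standard fact about any finite measure, and it uses no rate--distortion structure beyond the measurability of $R_\cdot(D)$.

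The argument goes as follows. For $\bar R\ge 0$ let $E_{\bar R}:=\{\theta\in\Lambda: R_\theta(D)=\bar R\}$. For distinct values $\bar R\neq\bar R'$ the sets $E_{\bar R}$ and $E_{\bar R'}$ are disjoint, because each $\theta$ has a single value $R_\theta(D)$. For each integer $k\ge1$ let $A_k:=\{\bar R\ge0:\Pr(E_{\bar R})>1/k\}$. Suppose $A_k$ contained $k$ distinct points $\bar R_1,\dots,\bar R_k$. By finite additivity over these disjoint events,
\begin{equation*}
1\ \ge\ \Pr\Big(\bigcup_{j=1}^{k}E_{\bar R_j}\Big)=\sum_{j=1}^{k}\Pr(E_{\bar R_j})\ >\ k\cdot\frac1k=1,
\end{equation*}
which is a contradiction. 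Hence $|A_k|\le k-1$, so each $A_k$ is finite.

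Every $\bar R$ with $\Pr(E_{\bar R})>0$ lies in some $A_k$, since its probability exceeds $1/k$ for $k$ large enough. The set of atoms is therefore $\bigcup_{k\ge1}A_k$. This is a countable union of finite sets, so it is countable, which proves the proposition. The possible value $R_\theta(D)=+\infty$ for non-compact behaviour plays no role, because the statement only concerns finite $\bar R\ge0$, and in any case one extra point does not affect countability.

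The only step that is not routine is measurability. If $R_\cdot(D)$ were not $\Pr$-measurable, we would read $\Pr(E_{\bar R})$ as an outer measure. The argument above then needs $\Pr^*$ to be additive on a finite disjoint family, and that can fail. To avoid this, we adopt, as the paper does, the convention that only $\Pr$-measurable subclasses are considered. That convention amounts to the measurability assumption stated at the start.
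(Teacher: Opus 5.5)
Your proof is correct and is essentially the paper's argument: the level sets $\{\theta : R_\theta(D)=\bar R\}$ are pairwise disjoint, at most $k-1$ of them can have probability exceeding $1/k$, and the set of atoms is therefore a countable union of finite sets. Your explicit remark that $\theta\mapsto R_\theta(D)$ must be $\Pr$-measurable states an assumption the paper leaves implicit, beyond its general convention of restricting to $\Pr$-measurable subclasses.
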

\begin{proof}
The sets  $A_{\bar R} := \{\theta : R_\theta(D) = \bar R\}$ are pairwise disjoint for distinct $\bar R$. Since $\Pr$ is a probability measure, for each $n \in \mathbb{N}$, fewer than $n$ of them can satisfy $\Pr(A_{\bar R}) > 1/n$, as otherwise their union would have probability exceeding 1. Therefore, \[\{\bar R : \Pr(A_{\bar R}) > 0\} = \bigcup_{n} \{\bar R : \Pr(A_{\bar R}) > 1/n\}\] is a countable union of finite sets, and hence at most countably many $\bar R$ carry positive mass.
\end{proof}

\begin{remark}\label{rem:boundary_convention}
By Proposition~\ref{prop:countable_atoms}, for any fixed \(D\), the boundary event \(\{\theta:R_\theta(D)=\bar R\}\) has a \(\Pr\)-probability zero for all but at most countably many values of \(\bar R\). Throughout this paper, all inequalities involving a rate budget \(\bar R\) (e.g., \(\Sigma^{\rm opt}(\bar R,D)=\{\theta:R_\theta(D)\le\bar R\}\), 
are stated with non-strict inequality. Whenever \(\bar R\) is chosen outside this countable exceptional set, this choice is immaterial: \(\Pr(\{\theta:R_\theta(D)\le\bar R\})=\Pr(\{\theta:R_\theta(D)<\bar R\})\), and \(P_{\rm out}(\bar R,D)\) is unambiguous regardless of convention. We do not further resolve the exceptional case, in which the precise treatment of the boundary reflects a definitional choice rather than an operationally distinguishable guarantee, consistent with Theorem~\ref{thm:sakrison}, which is itself only stated with an arbitrarily small margin \(\delta>0\).
\end{remark}



\section{Computing the Outage Probability for Compact Classes}\label{sec:finite}

In general, evaluating \(P_{\rm out}(R,D)\) requires optimizing over all subclasses \(\Sigma\subseteq\Lambda\), and \(R_\Sigma(D)\) is itself generally intractable to compute for an arbitrary \(\Sigma\) unless \(\Sigma\) is compact (Theorem~\ref{thm:sakrison}). We now show that when \(\Lambda\) itself is compact, this optimization complexity reduces considerably, and \(P_{\rm out}(R,D)\) can be computed directly from the rate--distortion functions of the individual sources \(\theta\in\Lambda\).

\begin{lemma}\label{lem:subclass_compact}
If  \raisebox{-0.01in}{\(\Lambda\)} is compact, then every subclass  \raisebox{-0.01in}{\(\Sigma\subseteq\Lambda\)} is compact.
\end{lemma}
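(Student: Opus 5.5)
The plan is to verify Sakrison's definition of compactness for \(\Sigma\) directly, reusing the witnesses that exist for \(\Lambda\). The key observation is that the compactness condition in \eqref{eq:compactness} is a uniform requirement over the class. Such requirements are inherited by any smaller class, because a bound that holds for every \(\theta\in\Lambda\) also holds for every \(\theta\) in a subset of \(\Lambda\).

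Concretely, I fix an arbitrary \(\varepsilon_1>0\) and invoke the compactness of \(\Lambda\). This yields a totally bounded set \(\mathcal X'\subseteq\mathcal X\) and a map \(x_p:\mathcal X\to\mathcal X'\) with \(\mathbb E_\theta[d(X,x_p(X))]\le\varepsilon_1\) for all \(\theta\in\Lambda\). I then use the same pair \((\mathcal X',x_p)\) as the witness for \(\Sigma\). Since \(\Sigma\subseteq\Lambda\), every \(\theta\in\Sigma\) is in \(\Lambda\), so the inequality holds for all \(\theta\in\Sigma\). Total boundedness of \(\mathcal X'\) is a property of the set alone and does not depend on the class, so it carries over unchanged. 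Because \(\varepsilon_1\) was arbitrary, \(\Sigma\) satisfies the definition.

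I would also check the standing assumption that \(\mathbb E_\theta[d(X,x_0)]<\infty\) for some \(x_0\in\mathcal X'\). It is likewise a "for all \(\theta\)" condition and therefore restricts to \(\Sigma\) automatically. The degenerate case \(\Sigma=\emptyset\) is vacuously compact.

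I do not expect a genuine obstacle, since the argument is pure monotonicity of a universally quantified condition. The only point requiring care is that the witnesses \((\mathcal X',x_p)\) in the definition depend only on \(\varepsilon_1\) and not on \(\theta\), so they can be reused verbatim for the subclass.
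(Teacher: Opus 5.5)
Your proposal is correct and matches the paper's proof: both fix \(\varepsilon_1\), take the totally bounded set \(\mathcal X'\) and map \(x_p\) that witness compactness of \(\Lambda\), and note that the uniform bound \(\mathbb E_\theta[d(X,x_p(X))]\le\varepsilon_1\) restricts to every \(\theta\in\Sigma\). Your extra remarks, on the standing finite-moment assumption and the empty subclass, are harmless additions that the paper does not spell out.
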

\begin{proof}

Since $\Lambda$ is compact, for any $\epsilon_1>0$, there exists a totally bounded set $\mathcal X' = \mathcal{X}'(\epsilon_1)$ and a function \(x_p(\cdot)\) with range \(\mathcal X'\) such that \[\mathbb E_\theta[d(X,x_p(X))]\le\epsilon_1\] for all \(\theta\in\Lambda\).

Since \(\Sigma\subseteq\Lambda\), the same \(\mathcal X'\) and \(x_p\) satisfy \[\mathbb E_\theta[d(X,x_p(X))]\le\epsilon_1\] for all \(\theta\in\Sigma\). As $\epsilon_1$ is arbitrary, \(\Sigma\) is compact.
\end{proof}

Consequently, by Theorem~\ref{thm:sakrison}, \(R_\Sigma(D)=\sup_{\theta\in\Sigma}R_\theta(D)\) for \emph{every} subclass \(\Sigma\subseteq\Lambda\) and not only for \(\Lambda\) itself. This allows us to reduce the subclass-level optimization defining \(P_{\rm out}(R,D)\) to a pointwise condition on individual sources.

\begin{theorem}\label{thm:compact_reduction}
Let \(\Lambda\) be compact. Then the subclass maximizing \(\Pr(\Sigma)\) subject to \(R_\Sigma(D)\le R\) is
\begin{equation}\label{eq:sigma_opt}
    \Sigma^{\rm opt} := \{\theta\in\Lambda : R_\theta(D)\le R\},
\end{equation}
and the outage probability is
\begin{align}
    P_{\rm out}(R,D) &= 1-\Pr(\Sigma^{\rm opt}) \notag \\
    &= 1-\int_{\{\theta\in\Lambda\,:\,R_\theta(D)\le R\}} d\Pr(\theta),\label{eq:pout_compact}
\end{align}
provided \(\Sigma^{\rm opt}\) is \(\Pr\)-measurable.
\end{theorem}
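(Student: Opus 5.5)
The plan is to reduce the subclass-level constraint to a pointwise one using Lemma~\ref{lem:subclass_compact} together with Theorem~\ref{thm:sakrison}, and then argue that \(\Sigma^{\rm opt}\) is the largest feasible subclass with respect to inclusion. Since \(\Lambda\) is compact, Lemma~\ref{lem:subclass_compact} makes every \(\Sigma\subseteq\Lambda\) compact. Then \eqref{eq:op_vs_I_compact} gives \(R_\Sigma(D)=\sup_{\theta\in\Sigma}R_\theta(D)\) for every such \(\Sigma\). Consequently the feasibility condition \(R_\Sigma(D)\le R\) is equivalent to \(R_\theta(D)\le R\) for all \(\theta\in\Sigma\), that is, to \(\Sigma\subseteq\Sigma^{\rm opt}\).

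The argument then proceeds in two steps. First, \(\Sigma^{\rm opt}\) itself is feasible: by definition every \(\theta\in\Sigma^{\rm opt}\) satisfies \(R_\theta(D)\le R\), so the supremum is at most \(R\). This uses the non-strict inequality in \eqref{eq:sigma_opt}, consistent with Remark~\ref{rem:boundary_convention}; if \(\Sigma^{\rm opt}\) is empty, the statement holds trivially with \(\Pr(\Sigma^{\rm opt})=0\). Second, any feasible measurable \(\Sigma\) satisfies \(\Sigma\subseteq\Sigma^{\rm opt}\), so monotonicity of \(\Pr\) gives \(\Pr(\Sigma)\le\Pr(\Sigma^{\rm opt})\). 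Hence the supremum in \eqref{eq:opt_rate_outage} is attained at \(\Sigma^{\rm opt}\), and \(P_{\rm out}(R,D)=1-\Pr(\Sigma^{\rm opt})\). Writing \(\Pr(\Sigma^{\rm opt})\) as the integral of \(d\Pr\) over the set gives \eqref{eq:pout_compact}. The measurability hypothesis is exactly what is needed for \(\Sigma^{\rm opt}\) to be an admissible competitor.

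The main point needing care is the use of Theorem~\ref{thm:sakrison} for the operational \(R_\Sigma(D)\) on an arbitrary subclass. Equality \eqref{eq:op_vs_I_compact} is stated only up to the closure in Definition~\ref{def:achievable}, and Sakrison's achievability carries a \(\delta\) margin. I would check that \(R_\Sigma(D)\) is defined via the closed region \(\mathcal R_\Lambda\), so that pairs \((\sup_\theta R_\theta(D)+\delta, D+\delta)\) for all \(\delta>0\) converge to \((\sup_\theta R_\theta(D),D)\), which then lies in \(\mathcal R_\Sigma\). This justifies the equality and hence feasibility of \(\Sigma^{\rm opt}\) even when the supremum equals \(R\) exactly. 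The remaining steps are purely set-theoretic.
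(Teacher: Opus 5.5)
Your proposal is correct and follows essentially the same route as the paper: Lemma~\ref{lem:subclass_compact} together with Theorem~\ref{thm:sakrison} reduces feasibility of \(\Sigma\) to \(\Sigma\subseteq\Sigma^{\rm opt}\), and monotonicity of \(\Pr\) then identifies \(\Sigma^{\rm opt}\) as the maximizer. Your additional check handles the case \(\sup_{\theta\in\Sigma^{\rm opt}}R_\theta(D)=R\): the closure in Definition~\ref{def:achievable} absorbs Sakrison's \(\delta\)-margin, so \(\Sigma^{\rm opt}\) is still feasible there. The paper leaves this implicit by invoking the exact equality in Theorem~\ref{thm:sakrison}.
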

\begin{proof}
By Lemma~\ref{lem:subclass_compact}, \(R_\Sigma(D)=\sup_{\theta\in\Sigma}R_\theta(D)\) for every \(\Sigma\subseteq\Lambda\), so the constraint \(R_\Sigma(D)\le R\) holds if and only if \(R_\theta(D)\le R\) for every \(\theta\in\Sigma\), i.e., if and only if \(\Sigma\subseteq\Sigma^{\rm opt}\). Since \(\Pr\) is monotone under inclusion, \(\Pr(\Sigma)\le\Pr(\Sigma^{\rm opt})\) for every feasible \(\Sigma\), with equality at \(\Sigma=\Sigma^{\rm opt}\). Hence \(\Sigma^{\rm opt}\) is the (essentially unique) maximizer, and \eqref{eq:pout_compact} follows from the definition of \(P_{\rm out}\) in Section~\ref{sec:outage}.
\end{proof}

\begin{corollary}[Finite classes]\label{cor:finite}
If \(\Lambda\) is finite (and hence compact), \eqref{eq:pout_compact} reduces to
\begin{equation}\label{eq:pout_finite}
    P_{\rm out}( R,D)
    =
    1-\sum_{\theta\in\Lambda\,:\,R_\theta(D)\le R}\Pr(\theta).
\end{equation}
\end{corollary}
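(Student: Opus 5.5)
The plan is to derive Corollary~\ref{cor:finite} directly from Theorem~\ref{thm:compact_reduction}. Two things need checking: that a finite class meets the theorem's hypotheses, and that the integral in \eqref{eq:pout_compact} becomes a finite sum.

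First, we show that a finite class \(\Lambda=\{\theta_1,\ldots,\theta_K\}\) is compact in Sakrison's sense. This is asserted in Section~\ref{sec:preliminaries}, but it is the step that needs the most care, so I would verify it. Fix \(\varepsilon_1>0\). By the standing assumption, \(\mathbb E_{\theta_k}[d(X,x_0)]<\infty\) for each \(k\). Since \(\mathcal X\) is separable, take a countable dense set \(\{a_j\}\). Define \(x_p^{(N)}(x)\) as the nearest point to \(x\) among \(\{x_0,a_1,\ldots,a_N\}\) under \(d\); this relies on the natural reading that \(d\) is compatible with the metric. The target set \(\{x_0,a_1,\ldots,a_N\}\) is finite, hence totally bounded. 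For each \(\theta_k\), we have \(d(X,x_p^{(N)}(X))\le d(X,x_0)\), which is integrable, and \(d(X,x_p^{(N)}(X))\to 0\) pointwise as \(N\to\infty\). Dominated convergence then gives \(\mathbb E_{\theta_k}[d(X,x_p^{(N)}(X))]\le\varepsilon_1\) for \(N\ge N_k\). Taking \(N=\max_k N_k\), which is a finite maximum, yields \eqref{eq:compactness} uniformly over \(\Lambda\). If one prefers not to redo this, one can simply cite Sakrison's remark that finite classes are compact, as the paper already does.

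Second, we handle measurability, which is the other hypothesis of Theorem~\ref{thm:compact_reduction}. On a finite \(\Lambda\), the law \(\Pr\) is determined by the point masses \(\Pr(\theta)\) and can be taken on the power set. Hence \(\Sigma^{\rm opt}=\{\theta:R_\theta(D)\le R\}\) is automatically \(\Pr\)-measurable.

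Finally, we apply \eqref{eq:pout_compact}. Integration against a discrete measure over the finite set \(\Sigma^{\rm opt}\) equals \(\sum_{\theta\in\Sigma^{\rm opt}}\Pr(\theta)\), which gives \eqref{eq:pout_finite}. The only genuine obstacle is the compactness verification in the first step, and the dominated-convergence argument on a finite union handles it.
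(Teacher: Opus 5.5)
Your proposal is correct and follows the paper's route: a finite class is compact, so Theorem~\ref{thm:compact_reduction} applies, and the integral in \eqref{eq:pout_compact} against a discrete prior becomes the finite sum of point masses. The one addition is your dominated-convergence check that finite classes are compact, which the paper simply asserts following Sakrison; that check is sound given your stated assumption that $d(x,\cdot)$ is continuous in the metric with $d(x,x)=0$, and provided the points $a_j$ and $x_0$ lie in the reproduction alphabet.
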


Corollary~\ref{cor:finite} provides a direct computational recipe: given the individual rate--distortion functions \(R_\theta(D)\) and the prior $\Pr(\theta)$ for each $\theta\in\Lambda$, the outage probability at any $(R,D)$ is obtained by summing \(\Pr(\theta)\) over the sources whose individual rate--distortion function does not exceed \(R\) at distortion \(D\); No explicit search over subclasses, and no evaluation of \(R_\Sigma(D)\) for any \(\Sigma\), is required. More generally, Theorem~\ref{thm:compact_reduction} shows this same reduction holds for any compact class, finite or not such as a bounded-variance Gaussian class, with the sum replaced by an integral against the prior \(\Pr\) over \(\Sigma^{\rm opt}\).

\begin{remark}[Side Information]
Suppose we have an arbitrary \emph{compact} class, \(\Lambda\), and both the encoder and decoders are \emph{informed} of the source \(\theta\) generating the symbols, \emph{i.e.}, with perfect knowledge of the source statistics. Then, the distortion-rate outage probability is \(1-\sum_{\theta\in\Lambda\,:\,R_\theta(D)\le R}\Pr(\theta)\).
We could also consider the case where only the encoder is informed of the source \(\theta\), and \(\Lambda\) is finite (and therefore compact). Here, we may use a finite number of prefix bits to signal \(\theta\). This essentially informs the decoder of the source \(\theta\). Even in this case, the outage probability remains the same: the prefix cost is a vanishing fraction of the rate as the block length grows. Therefore, this neither relaxes nor tightens which sources satisfy \(R_\theta(D)\le R\). Therefore, it does not the distortion-rate outage probability asymptotically.

Although no improvements are feasible in terms of DROP, side information may still be useful in the following cases:
\begin{enumerate}[label=(\roman*)]
    \item Finite blocklength \(n\) performance
    \item Non-compact classes when \(R_\Lambda(D)\neq\sup_{\theta\in\Lambda} R_\theta(D)\)
\end{enumerate}
\end{remark}

We illustrate the distortion-rate outage probability \(P_{\rm out}(R,D)\) with two examples: a finite class of Bernoulli sources under Hamming distortion, for which Corollary~\ref{cor:finite} applies directly, and a continuum class of Gaussian sources under mean-squared error, which is known to be compact.

\begin{example}[A finite class of Bernoulli sources]

Let \(\Lambda=\{p_1,\ldots,p_k\}\subset(0,1)\), where each \(\theta_i=p_i\) is a Bernoulli\((p_i)\) source, and let \(d(\cdot,\cdot)\) be the Hamming distortion measure. The rate--distortion function of a Bernoulli\((p)\) source under Hamming distortion \cite{cover_elements_2005} is
\begin{equation*}
    R_\theta(D)= 
    \begin{cases} 
        H_b(p) - H_b(D), & 0\le D \le \min \{p,1-p\},\\ 
        0, & D > \min \{p, 1-p\},
    \end{cases}
\end{equation*}
where \(H_b(\cdot)\) is the binary entropy function. Since \(\Lambda\) is finite, by Corollary~\ref{cor:finite}, for a rate limit \(R\) and distortion level \(D\),
\begin{equation*}
P_{\rm out}(R,D) = 1-\sum_{p_i\in\Sigma^{\rm opt}}\Pr(p_i),
\end{equation*}
where
{\small\begin{equation*}
    \Sigma^{\rm opt}=
    \begin{cases}
        \{p_i\in\Lambda : H_b(p_i)\le R+H_b(D)\},&\text{when } 0<D<0.5,\\
        \Lambda, & \text{otherwise}.
    \end{cases}
\end{equation*}}

As a concrete instance, let \(\Lambda=\{0.1,0.3,0.5\}\) with prior \(\Pr(0.1)=0.5,\ \Pr(0.3)=0.3,\ \Pr(0.5)=0.2\), and fix \(D=0.05\). Here, \(H(D)\approx0.286\) bits. Table~\ref{tab:bernoulli} lists \(R_{p_i}(D)\).

\begin{table}[H]
\centering
\caption{Bernoulli source classes at distortion level \(D=0.05\).}
\label{tab:bernoulli}
\small
\begin{tabular}{c c c c}
\toprule
\(p_i\) & \(\Pr(\theta=p_i)\) & \(H(p_i)\) & \(R_{p_i}(D)\) \\
& & (bits) & (bits/sample) \\
\midrule
0.1 & 0.5 & 0.469 & 0.183 \\
0.3 & 0.3 & 0.881 & 0.595 \\
0.5 & 0.2 & 1.000 & 0.714 \\
\bottomrule
\end{tabular}
\end{table}

For \(R<0.183\), no source satisfies \(R_{p_i}(D)\le R\), so \(\Sigma^{\rm opt}=\emptyset\) and \(P_{\rm out}(R,0.05)=1\). For \(0.183\le R<0.595\) (e.g., \(R=0.2\)), \(\Sigma^{\rm opt}=\{0.1\}\) and \(P_{\rm out}(0.2,0.05)=1-0.5=0.5\). For \(0.595\le R<0.714\) (e.g., \(R=0.65\)), \(\Sigma^{\rm opt}=\{0.1,0.3\}\) and  \raisebox{-0.01in}{\(P_{\rm out}(0.65,0.05)=1-0.8=0.2\)}. For \(R\ge0.714\),  \raisebox{-0.02in}{\(\Sigma^{\rm opt}=\Lambda\)} and \(P_{\rm out}=0\).

In Figure~\ref{fig:pout_bernoulli}, we show \(P_{\rm out}(R,D)\) for the tighter distortion level, \(D=0.01\), where \(H(0.01)\approx0.081\) bits. We notice that since \(R_{p_i}(D)\) is decreasing in \(D\), every breakpoint shifts to a higher rate, and the entire curve is shifted rightward relative to the higher distortion level, \(D=0.05\).

\begin{figure}[!ht]
\centering
\begin{tikzpicture}
\begin{axis}[
    width=\linewidth, height=6cm,
    xlabel={$R$ (bits)}, ylabel={$P_{\rm out}(R,D)$},
    xmin=0, xmax=1.0, ymin=0, ymax=1.1,
    grid=both, grid style={gray!20},
    scaled x ticks=false,
    x tick label style={/pgf/number format/fixed, /pgf/number format/precision=3}, ytick={0,0.2,0.5,1.0},
    legend pos=north east, legend cell align=left,
]
\addplot+[const plot, mark=none, thick] coordinates {
(0,1) (0.183,0.5) (0.595,0.2) (0.714,0)
};
\addlegendentry{$D=0.05$}
\addplot+[const plot, mark=none, thick, dashed] coordinates {
(0,1) (0.388,0.5) (0.801,0.2) (0.919,0)
};
\addlegendentry{$D=0.01$}
\end{axis}
\end{tikzpicture}
\caption{Outage probability \(P_{\rm out}(R,D)\) for the finite Bernoulli class, as a function of rate budget \(R\), for two distortion levels \(D\in\{0.01,0.05\}\). The looser target \(D=0.05\) shifts every drop to a lower rate, since \(R_{p_i}(D)\) decreases as \(D\) increases.}
\label{fig:pout_bernoulli}
\end{figure}

\end{example}

\begin{example}[A continuous class of Gaussian sources]

Let \(\Lambda = \{\theta=(\mu,\sigma^2) : |\mu|\le M,\ 0<\sigma^2\le\sigma_0^2\}\) and let \(d(x,\hat x)=(x-\hat x)^2\). This class is compact according to~\cite[Section 6]{SAKRISON1969165}. By  Theorem~\ref{thm:compact_reduction}, \(R_\Lambda(D)=\sup_{\theta\in\Lambda}R_\theta(D)\). The same holds for every subclass \(\Sigma\subseteq\Lambda\). With MSE rate--distortion being invariant to the mean, \[R_\theta(D)=\tfrac12\log_2(\sigma^2/D),\] for \(D\le\sigma^2\), depending on \(\theta=(\mu,\sigma^2)\) only through \(\sigma^2\). Consequently,
\begin{equation*}
    \Sigma^{\rm opt} = \{(\mu,\sigma^2)\in\Lambda : \sigma^2 \le D\cdot 2^{2R}\},
\end{equation*}
and is an interval constraint on the variance alone.

Suppose \(\mu\) and \(\sigma^2\) are independent under \(\Pr\), with a uniform \( \sigma^2 \sim \mathcal{U}(0,\sigma_0^2]\) (and \(\mu\) arbitrarily distributed on \([-M,M]\)). Then
\begin{equation*}
    \Pr(\Sigma^{\rm opt}) = \Pr\big(\sigma^2\le D\cdot2^{2R}\big) = \frac{\min \{ D \cdot 2^{2R}, \,\sigma_0^2\}}{\sigma_0^2}.
\end{equation*}
By Theorem~\ref{thm:compact_reduction},
\begin{equation}\label{eq:gaussian_pout}
    P_{\rm out}(R,D) = \max \left\{ 1-\frac{D\cdot2^{2R}}{\sigma_0^2}, 0 \right\}.
\end{equation}

Equation~\eqref{eq:gaussian_pout} shows that \(P_{\rm out}\) is decreasing in \(R\) and \(D\), reaching zero once \(R\ge\tfrac12\log_2(\sigma_0^2/D)=\sup_{\theta\in\Lambda}R_\theta(D)\). This is consistent with \eqref{eq:general_ineq}: no outage is possible once the rate budget covers the worst-case source in the class.

\begin{figure}[!ht]
\centering
\begin{tikzpicture}
\begin{axis}[
    width=\linewidth, height=6cm,
    xlabel={\(R\) (bits)}, ylabel={\(P_{\rm out}(R,D)\)},
    xmin=0, xmax=2, ymin=0, ymax=1,
    legend pos=north east, legend cell align=left,
    grid=both, grid style={gray!20},
]
\addplot+[mark=none, thick] table[x=R,y=D01] {
R D01
0.00 0.9000
0.10 0.8851
0.20 0.8680
0.30 0.8484
0.40 0.8259
0.50 0.8000
0.60 0.7703
0.70 0.7361
0.80 0.6969
0.90 0.6518
1.00 0.6000
1.10 0.5405
1.20 0.4722
1.30 0.3937
1.40 0.3036
1.50 0.2000
1.60 0.0810
1.661 0
1.70 0.0000
2.00 0.0000
};
\addplot+[mark=none, thick] table[x=R,y=D03] {
R D03
0.00 0.7000
0.10 0.6554
0.20 0.6041
0.30 0.5453
0.40 0.4777
0.50 0.4000
0.60 0.3108
0.70 0.2083
0.80 0.0906
0.8685 0
1.0 0
2.0 0
};
\addplot+[mark=none, thick] table[x=R,y=D05] {
R D05
0.00 0.5000
0.10 0.4257
0.20 0.3402
0.30 0.2421
0.40 0.1294
0.50 0.0000
0.6 0
2.0 0
};
\legend{$D=0.1$, $D=0.3$, $D=0.5$}
\end{axis}
\end{tikzpicture}
\caption{Outage probability \(P_{\rm out}(R,D)\) for the compact Gaussian class with (\(\sigma_0^2=1\)), as a function of rate budget \(R\), for several fixed distortion levels \(D\). Each curve reaches zero at \(R=\tfrac12\log_2(\sigma_0^2/D)\), consistent with \eqref{eq:gaussian_pout}.}
\label{fig:pout_gaussian}
\end{figure}

Since this class is a special case of the class of stable sources which have been recently studied in~\cite{karimRD}, the DROP expression can be readily extended to the general class of stable sources.

\end{example}

\section{Conclusion}
We revisited the source coding problem for a class of sources and introduced Distortion-Rate Outage Probability (DROP) as a performance metric for slow sources, in which the underlying source \(\theta\) remains fixed over the duration of transmission of the codeword. This characterization provides a natural source-side analogue of the channel-side notions of outage capacity and distortion outage. Through illustrative examples, we demonstrated how DROP highlights the tradeoff between outage probability, rate and distortion constraints. This formulation may further serve as a foundation for a joint outage notion in which a slow class of sources is transmitted over a slow-fading channel.

\section*{AI Usage}
The authors acknowledge the use of ChatGPT for English and \LaTeX-related assistance, and generation of Figures~\ref{fig:pout_bernoulli} and \ref{fig:pout_gaussian}. The main ideas and technical contents in this work were produced independently by the authors. Claude was used to review and refine some of the technical contents.

\bibliographystyle{IEEEtran}
\bibliography{ref}

\end{document}